\documentclass[11pt]{article}
\usepackage[T2A]{fontenc}
\usepackage{ alphalph,etoolbox }
\usepackage{amsthm, amsmath, amssymb, amsbsy, amscd, amsfonts, latexsym, euscript,epsf, bm}
\newtheorem{thm}{Theorem}[section]
\usepackage{bbold, bm}
\usepackage{epic,eepic}
\usepackage{texdraw}
\usepackage[dvips]{color}
\usepackage{color}
\usepackage{ dsfont }
\usepackage{graphicx}
\usepackage{exscale,relsize}
\usepackage{authblk}
\usepackage{url}
\usepackage{hyperref}
\usepackage{enumerate}
\usepackage{graphicx}
\newtheorem{proposition}[thm]{Proposition}

\newtheorem{theorem}[thm]{Theorem}

\theoremstyle{definition}

\newtheorem{remark}[thm]{Remark}

\DeclareMathOperator{\id}{id}

\sloppy

\topmargin=-.5cm \textheight=22cm \oddsidemargin=.0cm
\evensidemargin=.0cm
\textwidth=17.5cm

\title{Electric network and Hirota type $4$-simplex maps}
\date{}

\author{S. Konstantinou-Rizos\thanks{skonstantin84@gmail.com}}
\affil{Centre of Integrable Systems, P.G. Demidov Yaroslavl State University, Yaroslavl, Russia}

\patchcmd{\subequations}{\alph{equation}}{\alphalph{\value{equation}}}{}{}

\begin{document}

\maketitle

\begin{abstract}
Bazhanov--Stroganov (4-simplex) maps are set-theoretical solutions to the 4-simplex equation, namely the fourth member of the family of $n$-simplex equations, which are fundamental equations of mathematical physics. 
In this paper, we develop a method for constructing Bazhanov--Stroganov maps as extensions of tetrahedron maps which are set-theoretical solutions to the Zamolodchikov tetrahedron (3-simplex) equation. We employ this method to construct birarional Bazhanov--Stroganov maps which boil down to the famous electric network and Hirota tetrahedron maps at a certain limit.
\end{abstract}

\bigskip
\begin{quotation}
\noindent{\bf PACS numbers:}
02.30.Ik, 02.90.+p, 03.65.Fd.
\end{quotation}
\begin{quotation}
\noindent{\bf Mathematics Subject Classification 2020:}
16T25, 81R12.
\end{quotation}
\begin{quotation}
\noindent{\bf Keywords:} Local tetrahedron equation, 4-simplex maps, Bazhanov--Stroganov equation, Zamolodchikov tetrahedron maps, Hirota type  4-simplex map, Electric network, 4-simplex map, Lax representations.
\end{quotation}

\section{Introduction}\label{intro}
The $n$-simplex equations are generalisations of the famous Yang--Baxter equation and are fundamental equations of Mathematical Physics. The most celebrated members of the family   of $n$-simplex equations are the Yang--Baxter (2-simplex) equation, the Zamolodchikov tetrahedron (3-simplex) equation \cite{Zamolodchikov-2} and the Bazhanov--Stroganov (4-simplex) equation \cite{Bazhanov}. 

The popularity of $n$-simplex equations is due to the fact that they appear in a wide range of areas of Mathematics and Physics, and they are strictly related to integrable systems of differential and difference equations. In particular, the $n$-simplex equations have applications in statistical mechanics, quantum field theories, combinatorics, low-dimensional topology,  the theory of integrable systems, as well as a plethora of other  fields of science (see, e.g., \cite{Bazhanov-Sergeev, Bazhanov-Sergeev-2, Sergeev-2, Kapranov-Voevodsky, Kashaev-Sergeev, Kassotakis-Kouloukas, Kassotakis-Tetrahedron, Nijhoff, TalalUMN21}). Therefore, the construction and classification of $n$-simplex maps is quite relevant and constitutes a very active area of research.

There are several methods that associate $n$-simplex maps (set theoretical solutions to the $n$-simplex equation) with integrable systems. We indicatively refer to the relation between $n$-simplex maps and integrable lattice equations via symmetries \cite{Pap-Tongas, Kassotakis-Tetrahedron} as well  as integrable nonlinear PDEs via Darboux and B\"acklund transformations \cite{BIKRP, Sokor-Sasha, Sokor-Papamikos}. Thus, the development of methods for constructing interesting $n$-simplex maps is a significant  task which may give rise to important integrable models.

This paper is concerned with the development of a simple scheme for construction of Bazhanov--Stroganov 4-simplex maps as extensions of Zamolodchikov tetrahedron maps. The proposed scheme is a generalisation of the method presented in \cite{Sokor-2023-PhysD} for constructing 4-simplex maps, and its methodology involves working with simple matrix refactorisation problems and makes use of straightforward computational algebra. The advantage of the proposed generalised scheme versus the method presented in \cite{Sokor-2023-PhysD} is that it derives more interesting 4-simplex maps which are noninvolutive extensions of involutive 3-simplex maps. This is a quite interesting phenomenon, since involutive maps possess trivial dynamics. 

As an illustrative example for our method, we first use the Hirota tetrahedron map  which has various important applications \cite{Doliwa-Kashaev,Sergeev}, however it lacks the property of being noninvolutive. One of its 4-simplex extensions, which we present in this paper, has equally elegant form and preserves all the properties of the original map (Lax representation, first integrals etc.); nonetheless, it has the extra significant property of being noninvolutive. Furthermore, we apply our  method  to Kashaev's electric network tetrahedron map \cite{Kashaev, Kashaev-Sergeev},  and we  construct its  4-simplex extensions. The Kashaev electric network transform is the equivalency condition of two electric devices, each made of three resistors and with three outer contacts, with star and triangle diagrams, respectively \cite{Kashaev}. Moreover, it is related to the well-celebrated Miwa's integrable equation, which at a certain continuous limit reduces to the KP (Kadomtsev--Petviashvili) equation \cite{Miwa, Kashaev}.

The rest of the text is organised as follows: In the next section, we provide all the required definitions and statements for the text to be self-contained. In particular, we give the definitions of Zamolodchikov's tetrahedron maps and Bazhanov--Stroganov 4-simplex maps, and we explain  their relation with the local Yang--Baxter and tetrahedron equation, respectively. Section \ref{extension scheme} deals with the development of a simple scheme for constructing Bazhanov--Stroganov maps as  extensions of Zamolodchikov tetrahedron maps. We apply this scheme to a Sergeev type map and construct new 4-simplex maps. In Section \ref{Hirota_maps}, using the method presented in Section \ref{extension scheme}, we derive novel Bazhanov--Stroganov maps which are 4-simplex extensions of the famous Hirota tetrahedron map \cite{Doliwa-Kashaev, Sergeev}. In Section \ref{EN_maps}, we construct new 4-simplex  maps which  can be restricted to the  famous electric network tetrahedron map  at certain limits. Finally, in Section \ref{conclusions} we discuss the obtained results and list possible directions for future research.

\section{Preliminaries}
In this section, we give the  definitions of tetrahedron and Bazhanov--Stroganov maps, and we explain how to derive such maps using matrix refactorisation problems.

\subsection{Local Yang--Baxter equation and tetrahedron maps}
Let $\mathcal{X}$ be a set. We denote  $\mathcal{X}^n =\underbrace{\mathcal{X}\times \ldots \times \mathcal{X}}_\text{$n$}$. A map $T:\mathcal{X}^3\rightarrow \mathcal{X}^3$, namely $T:(x,y,z)\mapsto (u(x,y,z),v(x,y,z),w(x,y,z)),$ is called a \textit{3-simplex map} or \textit{Zamolodchikov map} or \textit{tetrahedron map} if it satisfies the \textit{functional tetrahedron} or \textit{Zamolodchikov's tetrahedron} equation \cite{Zamolodchikov, Zamolodchikov-2}
\begin{equation}\label{Tetrahedron-eq}
    T^{123}\circ T^{145} \circ T^{246}\circ T^{356}=T^{356}\circ T^{246}\circ T^{145}\circ T^{123}.
\end{equation}
Functions $T^{ijk}:\mathcal{X}^6\rightarrow \mathcal{X}^6$, $i,j=1,\ldots 6,~i< j<k$, in \eqref{Tetrahedron-eq} are maps that act as map $T$ on the $ijk$ terms of the Cartesian product $\mathcal{X}^6$ and trivially on the others. For instance,
$T^{356}(x_1,x_2,x_3,x_4,x_5,x_6)=(x_1,x_2,u(x_3,x_5,x_6),x_4,v(x_3,x_5,x_6),w(x_3,x_5,x_6))$.

Now, let ${\rm L}={\rm L}(x)$ be a matrix depending on a variable $x\in\mathcal{X}$ of the form {\small ${\rm L}(x)= \begin{pmatrix} a(x) & b(x)\\  c(x) & d(x)\end{pmatrix},$} where its entries $a, b, c$ and $d$ are scalar functions of $x$. Let ${\rm L}^3_{ij}$, $i,j=1,2, 3$, $i\neq j$, be the $3\times 3$ extensions of ${\rm L}(x)$,  defined by {\small
${\rm L}^3_{12}(x)=\begin{pmatrix} 
 a(x) &  b(x) & 0\\ 
c(x) &  d(x) & 0\\
0 & 0 & 1
\end{pmatrix},~
 {\rm L}^3_{13}(x)= \begin{pmatrix} 
 a(x) & 0 & b(x)\\ 
0 & 1 & 0\\
c(x) & 0 & d(x)
\end{pmatrix}, ~
 {\rm L}^3_{23}(x)=\begin{pmatrix} 
   1 & 0 & 0 \\
0 & a(x) & b(x)\\ 
0 & c(x) & d(x)
\end{pmatrix}$. The following matrix trifactorisation problem
\begin{equation}\label{Lax-Tetra}
    {\rm L}^3_{12}(u){\rm L}^3_{13}(v){\rm L}^3_{23}(w)= {\rm L}^3_{23}(z){\rm L}^3_{13}(y){\rm L}^3_{12}(x),
\end{equation}
is the Maillet--Nijhoff equation \cite{Nijhoff} in Korepanov's form, which appears in the literature as the \textit{local Yang--Baxter} equation. If a map of $T:\mathcal{X}^3\rightarrow \mathcal{X}^3$ satisfies the local Yang--Baxter equation \eqref{Lax-Tetra}, then this map is possibly a tetrahedron map, and equation \eqref{Lax-Tetra} is called its \textit{Lax representation} \cite{Dimakis-Hoissen}.

The local Yang--Baxter \eqref{Lax-Tetra} is a generator of Zamolodchikov tetrahedron maps. Tetrahedron maps are related to pentagon maps \cite{Doliwa-Kashaev, Kashaev-Sergeev-2}, and, generally, $n$-simplex maps are related to solutions of the $n$-gon equations \cite{Dimakis-Hoissen}. The most popular tetrahedron maps appear in the works of Sergeev \cite{Sergeev} and Kashaev--Korepanov--Sergeev \cite{Kashaev-Sergeev}.

\subsection{Local tetrahedron equation and 4-simplex maps}
A map $S:\mathcal{X}^4\rightarrow \mathcal{X}^4$, namely $S:(x,y,z,t)\mapsto (u(x,y,z,t),v(x,y,z,t),w(x,y,z,t),r(x,y,z,t)),$
is called a \textit{4-simplex map} or\textit{Bazhanov--Stroganov} map if it satisfies the \textit{set-theoretical 4-simplex} equation \cite{Bazhanov}
\begin{equation}\label{4-simplex-eq}
    S^{1234}\circ S^{1567} \circ S^{2589}\circ S^{368,10} \circ S^{479,10}=S^{479,10}\circ S^{368,10}\circ S^{2589}\circ S^{1567} \circ S^{1234}.
\end{equation}
Functions $S^{ijkl}:\mathcal{X}^{10}\rightarrow\mathcal{X}^{10}$, $i,j,k,l=1,\ldots 10,~i< j<k<l$, in \eqref{4-simplex-eq} are maps that act as map $S$ on the $ijkl$ terms of the Cartesian product $\mathcal{X}^{10}$ and trivially on the others. For instance,
\begin{align*}
S^{1567}&(x_1,x_2,\ldots, x_{10})=\\
&(u(x_1,x_5,x_6,x_7),x_2,x_3,x_4,v(x_1,x_5,x_6,x_7),w(x_1,x_5,x_6,x_7),r(x_1,x_5,x_6,x_7),x_8,x_9,x_{10}).
\end{align*}

Now, let ${\rm L}={\rm L}(x)$ be a $3\times 3$ square matrix depending on a variable $x\in\mathcal{X}$ of the form
{\small ${\rm L}(x)= \begin{pmatrix} a(x) & b(x) & c(x)\\ d(x) & e(x) & f(x)\\ k(x) & l(x) & m(x)\end{pmatrix},$}
where its entries are scalar functions of $x$. Let ${\rm L}^6_{ijk}(x)$, $i,j,k=1,\ldots 6$, $i< j<k$, be the $6\times 6$ extensions of matrix ${\rm L}(x)$,  defined by
{\small
\begin{subequations}\label{6x6-extensions}
\begin{align}
    & {\rm L}^6_{123}(x)= \begin{pmatrix} 
a(x) & b(x) & c(x) & 0 & 0 & 0\\ 
d(x) & e(x) & f(x) & 0 & 0 & 0\\
k(x) & l(x) & m(x) & 0 & 0 & 0\\
0 & 0 & 0 & 1 & 0 & 0 \\
0 & 0 & 0 & 0 & 1 & 0 \\
0 & 0 & 0 & 0 & 0 & 1 \\
\end{pmatrix}, \quad
{\rm L}^6_{145}(x)= \begin{pmatrix} 
a(x) & 0 & 0 & b(x) & c(x) & 0\\ 
0 & 1 & 0 & 0 & 0 & 0 \\
0 & 0 & 1 & 0 & 0 & 0 \\
d(x) & 0 & 0 & e(x) & f(x) & 0\\
k(x) & 0 & 0 & l(x) & m(x) & 0\\
0 & 0 & 0 & 0 & 0 & 1 \\
\end{pmatrix},\\
& {\rm L}^6_{246}(x)= \begin{pmatrix} 
1 & 0 & 0 & 0 & 0 & 0 \\
0 & a(x) & 0 & b(x) & 0 & c(x)\\ 
0 & 0 & 1 & 0 & 0 & 0 \\
0 & d(x) & 0 & e(x) & 0 & f(x)\\
0 & 0 & 0 & 0 & 1 & 0 \\
0 & k(x) & 0 & l(x) & 0 & m(x)
\end{pmatrix},\quad 
 {\rm L}^6_{356}(x)= \begin{pmatrix} 
 1 & 0 & 0 & 0 & 0 & 0 \\
 0 & 1 & 0 & 0 & 0 & 0 \\
0 & 0 & a(x) & 0 & b(x) & c(x)\\ 
0 & 0 & 0 & 1 & 0 & 0 \\
0 & 0 & d(x) & 0 & e(x) & f(x)\\
0 & 0 & k(x) & 0 & l(x) & m(x)
\end{pmatrix}.
\end{align}
\end{subequations}
}

We call the following matrix four-factorisation problem
\begin{equation}\label{local-tetrahedron}
    {\rm L}^6_{123}(u){\rm L}^6_{145}(v){\rm L}^6_{246}(w){\rm L}^6_{356}(r)={\rm L}^6_{356}(t){\rm L}^6_{246}(z){\rm L}^6_{145}(y){\rm L}^6_{123}(x)
\end{equation}
\textit{local tetrahedron equation}. The local tetrahedron equation is a generator of potential solutions to the 4-simplex equation. If map $S:\mathcal{X}^4\rightarrow\mathcal{X}^4$ satisfies equation \eqref{local-tetrahedron}, then this  map is possibly a 4-simplex map, and the matrix refactorisation problem \eqref{local-tetrahedron} is called a \textit{Lax representation} for map $S$ \cite{Dimakis-Hoissen}.

\section{4-simplex extension scheme}\label{extension scheme}
Let $T:\mathcal{X}^3\rightarrow \mathcal{X}^3$ be a tetrahedron map with Lax representation \eqref{Lax-Tetra} for some matrix ${\rm L}(x)$. The aim is to construct systematically a 4-simplex map $S:\mathcal{X}^4\rightarrow \mathcal{X}^4$, with Lax representation \eqref{local-tetrahedron}, such that map $S$ implies $T$ at a certain limit. This can be achieved using a simple scheme which is summarised in Figure \ref{4-simplex scheme}.

In particular:

\textbf{Step I:} Consider a tetrahedron map, $T:\mathcal{X}^3\rightarrow\mathcal{X}^3$, generated by ${\rm L}(x)$ via \eqref{Lax-Tetra}. If we consider the $3\times 3$ extension of  ${\rm L}(x)$, namely matrix {\small ${\rm M}(x)\equiv{\rm L}^3_{12}(x)= \begin{pmatrix} 
a(x) & b(x) & 0\\ 
c(x) & d(x) & 0 \\
0 & 0 & 1
\end{pmatrix}$}, and substitute it to the local tetrahedron equation \eqref{local-tetrahedron}, we will obtain a trivial extension of map $T$ as a solution to the 4-simplex equation \cite{Sokor-2023-PhysD}.

\textbf{Step II:} In order to construct an nontrivial extension of map $T$, we introduce an auxiliary variable $x_2$, namely we consider the following matrix
\begin{equation}\label{K-matrix}
{\rm K}(x_1,x_2)= \begin{pmatrix} 
a(x_1,x_2) & b(x_1,x_2) & 0\\ 
c(x_1,x_2) & d(x_1,x_2) & 0 \\
0 & 0 & x_2
\end{pmatrix},
\end{equation}
such that for $x_2\rightarrow 1$, we have ${\rm K}(x_1,x_2)\rightarrow {\rm M}(x_1)$. Substitute ${\rm K}(x_1,x_2)$ to the local tetrahedron equation
\begin{equation}\label{local-tetra}
  {\rm K}^6_{123}(u_1,u_2){\rm K}^6_{145}(v_1,v_2){\rm K}^6_{246}(w_1,w_2){\rm K}^6_{356}(r_1,r_2)={\rm K}^6_{356}(t_1,t_2;\delta){\rm K}^6_{246}(z_1,z_2){\rm K}^6_{145}(y_1,y_2){\rm K}^6_{123}(x_1,x_2).
\end{equation}

\textbf{Step III:} Solve \eqref{local-tetra} for $u_i$, $v_i$, $w_i$ and $r_i$, $i=1,2$; we  aim to obtain a 4-simplex map or a correspondence which, for particular values of the free variables, will define 4-simplex maps. 

\begin{figure}[ht]
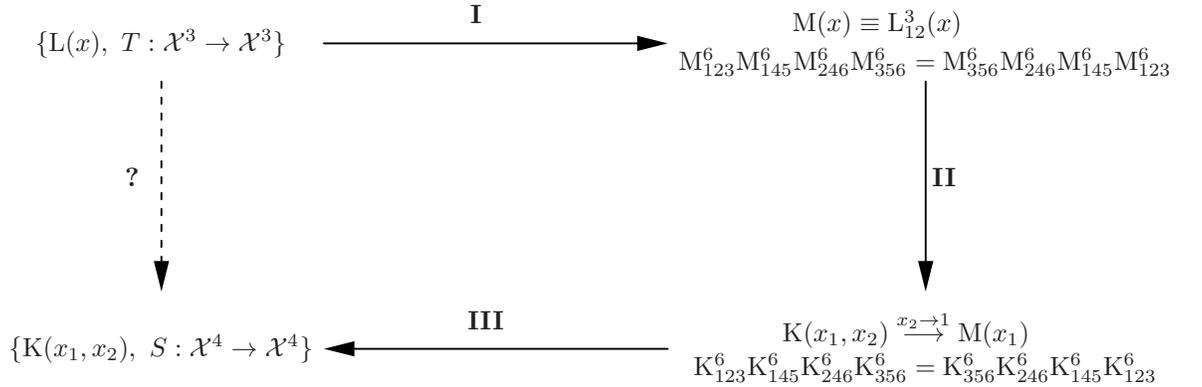

\centering
\centertexdraw{ 
\setunitscale 0.5
\move (-3.3 2.2)  \arrowheadtype t:F \avec(0.3 2.2)
\move (-5 1.8) \lpatt(0.067 0.1) \arrowheadtype t:F \avec(-5 -.4) \lpatt()
\move (3 1.8)  \arrowheadtype t:F \avec(3 -.4) 
\move (.3 -1)  \arrowheadtype t:F \avec(-3.3 -1) 
\textref h:C v:C \htext(-5 2.2){\small $\{{\rm L}(x),~T:\mathcal{X}^3\rightarrow\mathcal{X}^3\}$}
\textref h:C v:C \htext(2.5 2.4){\small ${\rm M}(x)\equiv{\rm L}^3_{12}(x)$}
\textref h:C v:C \htext(3 2){\small ${\rm M}^6_{123}{\rm M}^6_{145}{\rm M}^6_{246}{\rm M}^6_{356}=
  {\rm M}^6_{356}{\rm M}^6_{246}{\rm M}^6_{145}{\rm M}^6_{123}$}
\textref h:C v:C \htext(2.8 -.8){\small ${\rm K}(x_1,x_2)\stackrel{x_2\rightarrow 1}{\longrightarrow}{\rm M}(x_1)$}
\textref h:C v:C \htext(3 -1.2){\small ${\rm K}^6_{123}{\rm K}^6_{145}{\rm K}^6_{246}{\rm K}^6_{356}=
  {\rm K}^6_{356}{\rm K}^6_{246}{\rm K}^6_{145}{\rm K}^6_{123}$}
\textref h:C v:C \htext(-5 -1){\small $\{{\rm K}(x_1,x_2),~S:\mathcal{X}^4\rightarrow\mathcal{X}^4\}$}
\textref h:C v:C \small{\htext(-5.3 .8){\textbf{?}}}
\textref h:C v:C \small{\htext(3.2 .8){\textbf{II}}}
\textref h:C v:C \small{\htext(-1.7 2.5){\textbf{I}}}
\textref h:C v:C \small{\htext(-1.6 -.7){\textbf{III}}}
}
\caption{4-simplex extension scheme.}\label{4-simplex scheme}
\end{figure}

\begin{remark}\normalfont
This extends the method presented in \cite{Sokor-2023-PhysD} by allowing the elements $a, b, c$  and $d$ of matrix ${\rm K}(x_1,x_2)$ in \eqref{K-matrix} to depend on the auxiliary variable $x_2$. We will demonstrate that the latter assumption implies more interesting 4-simplex maps with more interesting dynamics than the maps derived via the  method in \cite{Sokor-2023-PhysD}.
\end{remark}

\subsection{Example: A new Sergeev type 4-simplex map}
Here,  we demonstrate that, employing the extension scheme of Section \ref{extension scheme}, one may construct more interesting 4-simplex extensions of Sergeev's  maps than the ones presented in \cite{Sokor-2023-PhysD}. 

Specifically: 

\textbf{Step I:} Consider the $k$-parametric family of tetrahedron maps $\mathbb{C}^3\rightarrow \mathbb{C}^3$ \cite{Sergeev}:
\begin{equation}\label{Sergeev-b}
    T:(x,y,z)\rightarrow \left(\frac{xy}{y+xz},\frac{xz}{k},\frac{y+xz}{x}\right),
\end{equation}
with Lax representation \eqref{Lax-Tetra}, where ${\rm L}^3_{ij}$, $i,j=1,2,3$, $i<j$, are the $3\times 3$ generalisations of matrix
${\rm L}(x)=\begin{pmatrix} 1 & x\\ \frac{\kappa}{x} & 0 \end{pmatrix}$, $x\in\mathbb{C}$. If we consider the natural $3\times 3$ extension of matrix ${\rm L}(x;\kappa)$, namely matrix  ${\rm M}(x)\equiv {\rm L}^3_{12}(x;\kappa)=\begin{pmatrix} 1 & x & 0\\  \frac{\kappa}{x} & 0 & 0\\ 0 & 0 & 1 \end{pmatrix}$, and substitute it to the local tetrahedron equation \eqref{local-tetra}, one  will  obtain the 4-simplex map 
$S:(x,y,z,t)\rightarrow \left(\frac{xy}{y+xz},\frac{xz}{k},\frac{y+xz}{x},t\right),$
which  is a trivial extension of \eqref{Sergeev-b}.

\textbf{Step II:} Consider, instead, matrix {\small ${\rm K}(x_1,x_2)=\begin{pmatrix} 
1 & \frac{x_1}{x_2} & 0\\ 
\frac{\kappa}{x_1} & 0 & 0 \\
0 & 0 & x_2
\end{pmatrix}$}, $x_i\in\mathbb{C}$, $i=1,2$, $\det({\rm K}(x_1,x_2))=-k$ and substitute it to the local tetrahedron equation
\begin{equation}\label{Sergeev-LT}
    {\rm K}^6_{123}(u_1,u_2){\rm K}^6_{145}(v_1,v_2){\rm K}^6_{246}(w_1,w_2){\rm K}^6_{356}(r_1,r_2)={\rm K}^6_{356}(t_1,t_2){\rm K}^6_{246}(z_1,z_2){\rm K}^6_{145}(y_1,y_2){\rm K}^6_{123}(x_1,x_2).
\end{equation}

\textbf{Step III:} We solve equation  \eqref{Sergeev-LT} for $(u_1,u_2,v_1,v_2,w_1,w_2,r_1,r_2)$, and we see  that equation \eqref{Sergeev-LT} is equivalent to the following correspondence between $\mathbb{C}^8$ and $\mathbb{C}^8$
\begin{align*}
u_1=\frac{x_1y_1z_2}{x_1z_1+y_1z_2},\quad u_2=x_2,\quad v_1=\frac{x_1z_1}{k},\quad v_2=\frac{y_2z_2}{w_2},\quad w_1=\frac{w_1x_2(x_1z_1+y_1z_2)}{x_1y_2z_2},\quad r_1=\frac{t_1y_2z_2}{w_2x_2},\quad r_2=\frac{t_2z_2}{w_2}.
\end{align*}
The above system does not define a 4-simplex map $\mathbb{C}^8\rightarrow \mathbb{C}^8$ for arbitrary $w_2$. However, for the choices $w_2=t_2$ and $w_2=y_2$ we obtain the following Bazhanov--Stroganov 4-simplex maps
\begin{align}
    &S_1: (x_1,x_2,y_1,y_2,z_1,z_2,t_1,t_2)\rightarrow \left(\frac{x_1y_1z_2}{x_1z_1+y_1z_2}, x_2, \frac{x_1z_1}{k},\frac{y_2z_2}{t_2}, \frac{x_2t_2(x_1z_1+y_1z_2)}{x_1y_2z_2}, t_2, \frac{t_1y_2z_2}{x_2t_2}, z_2\right),\label{Sergeev-map-1}\\
    &S_2: (x_1,x_2,y_1,y_2,z_1,z_2,t_1,t_2)\rightarrow \left(\frac{x_1y_1z_2}{x_1z_1+y_1z_2}, x_2, \frac{x_1z_1}{k},z_2, \frac{x_2(x_1z_1+y_1z_2)}{x_1z_2}, y_2, \frac{t_1z_2}{x_2}, \frac{t_2z_2}{y_2}\right).\label{Sergeev-map-2}
\end{align}

Maps \eqref{Sergeev-map-1} and \eqref{Sergeev-map-2} are more interesting than maps (27) and (28) in \cite{Sokor-2023-PhysD}.  This can  be done for all maps derived in  \cite{Sokor-2023-PhysD}. It can be proven that maps \eqref{Sergeev-map-1} and \eqref{Sergeev-map-2} are birational, and their inverses are also 4-simplex maps.

\begin{remark}\normalfont
Maps (27) and (28) in \cite{Sokor-2023-PhysD} were generated by {\small ${\rm K}(x_1,x_2)=\begin{pmatrix} 
1 & x_1 & 0\\ 
\frac{\kappa}{x_1} & 0 & 0 \\
0 & 0 & x_2
\end{pmatrix}$}.
\end{remark}

\section{Hirota type Bazhanov--Stroganov map}\label{Hirota_maps}
In this section, we apply the 4-simplex extension scheme of the previous section to the  well-known Hirota tetrahedron map. We construct novel 4-simplex maps which can be restricted to the Hirota map at certain limit.

\subsection{Hirota 4-simplex map}
The famous Hirota tetrahedron map  reads \cite{Doliwa-Kashaev, Sergeev}
\begin{equation}\label{Hirota}
    T:(x,y,z)\rightarrow \left(\frac{xy}{x+z},x+z,\frac{yz}{x+z}\right),
\end{equation}
and it possesses a Lax representation \eqref{Lax-Tetra} for ${\rm L}(x)=\begin{pmatrix} 
x & 1\\ 
1 & 0
\end{pmatrix}$, $x\in\mathbb{C}$. 

In order to construct a 4-simplex extension of Hirota map \eqref{Hirota}, we consider the $3\times 3$ extension of ${\rm L}(x)$: {\small 
\begin{equation}\label{L123-Hirota}
{\rm M}(x)\equiv {\rm L}^3_{12}(x)=\begin{pmatrix} 
x & 1 & 0\\ 
1 & 0 & 0\\
0 & 0 & 1
\end{pmatrix}.\end{equation}} Then, we introduce a generalisation of matrix ${\rm M}(x)$, namely matrix {\small ${\rm K}(x_1,x_2)=\begin{pmatrix} \frac{x_1}{x_2} & 1 & 0\\ 1 & 0 & 0 \\ 0 & 0 & x_2\end{pmatrix},$}
such that for $x_2\rightarrow 1$, ${\rm K}(x_1,x_2)\rightarrow {\rm M}(x_1)$. We consider the $6\times 6$ extensions of matrix ${\rm K}(x_1,x_2)$, namely the following
{\small\begin{align*}
    & {\rm K}^6_{123}(x_1,x_2)= \begin{pmatrix} 
\frac{x_1}{x_2} & 1 & 0 & 0 & 0 & 0\\ 
1 & 0 & 0 & 0 & 0 & 0\\
0 & 0 & x_2 & 0 & 0 & 0\\
0 & 0 & 0 & 1 & 0 & 0 \\
0 & 0 & 0 & 0 & 1 & 0 \\
0 & 0 & 0 & 0 & 0 & 1 \\
\end{pmatrix}, \quad
{\rm K}^6_{145}(x_1,x_2)= \begin{pmatrix} 
\frac{x_1}{x_2} & 0 & 0 & 1 & 0 & 0\\ 
0 & 1 & 0 & 0 & 0 & 0 \\
0 & 0 & 1 & 0 & 0 & 0 \\
1 & 0 & 0 & 0 & 0 & 0\\
0 & 0 & 0 & 0 & x_2 & 0\\
0 & 0 & 0 & 0 & 0 & 1 \\
\end{pmatrix},\\
& {\rm K}^6_{246}(x_1,x_2)= \begin{pmatrix} 
1 & 0 & 0 & 0 & 0 & 0 \\
0 & \frac{x_1}{x_2} & 0 & 1 & 0 & 0\\ 
0 & 0 & 1 & 0 & 0 & 0 \\
0 & 1 & 0 & 0 & 0 & 0\\
0 & 0 & 0 & 0 & 1 & 0 \\
0 & 0 & 0 & 0 & 0 & x_2
\end{pmatrix},\quad 
 {\rm K}^6_{356}(x_1,x_2)= \begin{pmatrix} 
 1 & 0 & 0 & 0 & 0 & 0 \\
 0 & 1 & 0 & 0 & 0 & 0 \\
0 & 0 & \frac{x_1}{x_2} & 0 & 1 & 0\\ 
0 & 0 & 0 & 1 & 0 & 0 \\
0 & 0 & 1 & 0 & 0 & 0\\
0 & 0 & 0 & 0 & 0 & x_2
\end{pmatrix},
\end{align*}}
and substitute them to the local tetrahedron equation
$$
{\rm K}^6_{123}(u_1,u_2){\rm K}^6_{145}(v_1,v_2){\rm K}^6_{246}(w_1,w_2){\rm K}^6_{356}(r_1,r_2)={\rm K}^6_{356}(t_1,t_2){\rm K}^6_{246}(z_1,z_2){\rm K}^6_{145}(y_1,y_2){\rm K}^6_{123}(x_1,x_2).
$$
The above is equivalent to the following correspondence between $\mathbb{C}^8$ and $\mathbb{C}^8$:
\begin{equation}\label{corr-hir-inv}
  u_1=\frac{x_1y_1z_2}{x_2z_1+x_1z_2},~~ u_2=y_2,~~ v_1=\frac{x_2z_1+x_1z_2}{z_2},~~ v_2=x_2,~~ w_1=\frac{w_2x_2y_1z_1}{y_2(x_2z_1+x_1z_2)},~~ r_1=\frac{t_1x_2z_2}{w_2y_2},~~ r_2=\frac{t_2z_2}{w_2}.
\end{equation}

For the choices of the free variable $w_2=t_2$ and $w_2=z_2$ the above correspondence defines 4-simplex maps. In particular, we have the following.

\begin{proposition}\label{Case-a}
The following maps
\begin{equation}\label{Hirota-a}
   S_1: (x_1,x_2,y_1,y_2,z_1,z_2,t_1,t_2)\rightarrow \left(\frac{x_1y_1z_2}{x_2z_1+x_1z_2},y_2,\frac{x_2z_1+x_1z_2}{z_2},x_2,\frac{t_2x_2y_1z_1}{y_2(x_2z_1+x_1z_2)},t_2,\frac{t_1x_2z_2}{t_2y_2},z_2\right),
\end{equation}
\begin{equation}\label{Hirota-b}
   S_2: (x_1,x_2,y_1,y_2,z_1,z_2,t_1,t_2)\rightarrow \left(\frac{x_1y_1z_2}{x_2z_1+x_1z_2},y_2,\frac{x_2z_1+x_1z_2}{z_2},x_2,\frac{x_2y_1z_1z_2}{y_2(x_2z_1+x_1z_2)},z_2,\frac{t_1x_2}{y_2},t_2\right),
\end{equation}
are eight-dimensional 4-simplex maps which share the same functionally independent invariants $I_1=x_1y_1$, $I_2=x_2y_2$, $I_3=x_2+y_2$, $I_4=z_2t_2$ and $I_5=z_2+t_2$. Moreover, map $S_1$ admits the invariant $I_6=x_2z_2t_1$, whereas $S_2$ admits the invariant $I_7=x_2t_1$.
\end{proposition}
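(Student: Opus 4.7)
The plan is to split the proof into three tasks: confirming that $S_1$ and $S_2$ each satisfy the 4-simplex equation \eqref{4-simplex-eq}, verifying each claimed invariant, and establishing functional independence of $I_1,\ldots,I_5$. I would proceed in reverse order of difficulty, beginning with the two easy parts.

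For the invariants, direct substitution suffices. Taking $S_1$ as in \eqref{Hirota-a}, one has $u_1 v_1 = \frac{x_1 y_1 z_2}{x_2 z_1 + x_1 z_2}\cdot \frac{x_2 z_1 + x_1 z_2}{z_2} = x_1 y_1$, which gives $I_1$; the remaining shared invariants are transparent because $(u_2, v_2) = (y_2, x_2)$ and $(w_2, r_2) = (t_2, z_2)$ are mere transpositions, immediately producing the symmetric functions $I_2, I_3, I_4, I_5$. The extra invariant for $S_1$ follows from $u_2\, w_2\, r_1 = y_2\cdot t_2 \cdot \frac{t_1 x_2 z_2}{t_2 y_2} = x_2 z_2 t_1$, with an analogous one-line check giving $I_7 = x_2 t_1$ under $S_2$. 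For functional independence, I would form the Jacobian of $(I_1,\ldots,I_5)$ in the coordinates $x_1, x_2, y_2, z_2, t_2$; the block structure of the invariants --- $I_1$ depending only on $(x_1,y_1)$, $I_2, I_3$ only on $(x_2, y_2)$, and $I_4, I_5$ only on $(z_2, t_2)$ --- makes this $5\times 5$ determinant block-diagonal and equal (up to sign) to $y_1(y_2 - x_2)(t_2 - z_2)$, which is nonzero on a dense Zariski-open subset of $\mathbb{C}^8$.

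The main obstacle is \eqref{4-simplex-eq} itself. By construction both maps arise from the local tetrahedron equation \eqref{local-tetrahedron} and therefore admit a Lax representation, but as the preliminary section emphasises, this only classifies them as \emph{potential} 4-simplex maps. The decisive step is to compute both compositions $S_j^{479,10}\circ S_j^{368,10}\circ S_j^{2589}\circ S_j^{1567}\circ S_j^{1234}$ and $S_j^{1234}\circ S_j^{1567}\circ S_j^{2589}\circ S_j^{368,10}\circ S_j^{479,10}$ on $\mathbb{C}^{10}$ component by component and to match the resulting rational functions in the ten input variables. Because five-fold compositions in ten variables very quickly produce unwieldy rational expressions, this verification is realistically carried out with a computer algebra system after clearing denominators. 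A potentially shorter alternative would be a uniqueness-of-refactorisation argument that promotes agreement of the two corresponding products of $6\times 6$ matrices, lifted to the appropriate ten-index embedding, to agreement of the maps themselves; that route, however, requires an auxiliary non-degeneracy lemma for the ${\rm K}$-factorisation, so I would fall back on direct symbolic verification as the primary plan.
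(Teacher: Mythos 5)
Your proposal is correct and follows essentially the same route as the paper's own (very terse) proof, which likewise derives the maps from the correspondence \eqref{corr-hir-inv}, verifies the 4-simplex property by direct substitution into \eqref{4-simplex-eq}, and declares the invariants obvious. Your version simply supplies the details the paper omits: the explicit one-line invariant checks (all of which are correct, including $u_2 w_2 r_1 = x_2 z_2 t_1$ for $S_1$) and the block-diagonal Jacobian argument for functional independence.
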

\begin{proof}
Maps \eqref{Hirota-a} and \eqref{Hirota-b} follow after substitution of $w_2=t_2$ and $w_2=z_2$, respectively, to \eqref{corr-hir-inv}. The 4-simplex property can be verified by straightforward substitution to the 4-simplex equation \eqref{4-simplex-eq}. The invariants are obvious.
\end{proof}

\begin{remark}\normalfont
    At the limit $(x_2,y_2,z_2,t_2)\rightarrow (1,1,1,1)$ maps \eqref{Hirota-a} and \eqref{Hirota-b} tend to
    \begin{equation}\label{Hirota-limit}
    S:(x,y,z,t)\rightarrow \left(\frac{xy}{x+z},x+z,\frac{yz}{x+z},t\right),
\end{equation}
which is a trivial 4-simplex extension of the Hirota map \eqref{Hirota}. Thus, they can be regarded as 4-simplex extensions of the Hirota map \eqref{Hirota}.
\end{remark}

\subsection{Noninvolutive Hirota type 4-simplex map}
Maps $S_1$ and $S_2$ of Proposition \ref{Case-a} are involutive, i.e. $S_i\circ S_i=\id$, $i=1,2$. In order to derive a noninvolutive 4-simplex generalisation of the Hirota map \eqref{Hirota}, we consider another generalisation of the generator \eqref{L123-Hirota}, namely matrix  {\small ${\rm K}(x_1,x_2)=\begin{pmatrix} x_1 & 1 & 0\\ \frac{1}{x_2} & 0 & 0 \\ 0 & 0 & x_2\end{pmatrix},$} such that $x_2\rightarrow 1$, ${\rm K}(x_1,x_2)\rightarrow {\rm M}(x_1)$, and  also $\det({\rm K}(x_1,x_2))=\det({\rm M}(x_1))=-1$.

We substitute matrix ${\rm K}(x_1,x_2)$ to \eqref{local-tetra}, and the latter is equivalent to
\begin{align}
    &u_1v_1=x_1y_1,\quad w_1+\frac{u_1}{w_2}=y_1,\quad \frac{v_1}{u_2}=\frac{x_1}{y_2}+\frac{z_1}{x_2},\quad \frac{1}{u_2w_2}=\frac{1}{y_2},\nonumber\\
    & r_1u_2=t_1x_2,\quad u_2=y_2,\quad \frac{1}{v_2}=\frac{1}{x_2z_2},\quad \frac{v_2}{r_2}=\frac{x_2}{t_2}. \label{sys-Hirota}
\end{align}

\begin{theorem}[Bazhanov--Stroganov Hirota type map]
The following map
\begin{equation}\label{Hirota-noninv}
     \hat{S}: (x_1,x_2,y_1,y_2,z_1,z_2,t_1,t_2)\rightarrow \left(\frac{x_1x_2y_1}{x_1x_2+y_2z_1},y_2,\frac{x_1x_2+y_2z_1}{x_2},x_2z_2,\frac{y_1y_2z_1}{x_1x_2+y_2z_1},1,\frac{t_1x_2}{y_2},t_2z_2\right),
\end{equation}
is an eight-dimensional noninvolutive 4-simplex map with functionally independent invariants $I_1=x_1y_1,$ $I_2=y_2$, $I_3=x_2t_1$ and $I_4=z_2t_2$. Moreover, map \eqref{Hirota-noninv} admits Lax representation \eqref{local-tetra}, where {\small ${\rm K}(x_1,x_2)=\begin{pmatrix} x_1 & 1 & 0\\ \frac{1}{x_2} & 0 & 0 \\ 0 & 0 & x_2\end{pmatrix}$}. Finally, at the limit $(x_2,y_2,z_2,t_2)\rightarrow (1,1,1,1)$, map \eqref{Hirota-noninv} can be  restricted to the Hirota map \eqref{Hirota}.
\end{theorem}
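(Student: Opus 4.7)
The plan is to derive the explicit formula in \eqref{Hirota-noninv} by solving the decoupled system \eqref{sys-Hirota} (which is already available from substituting ${\rm K}(x_1,x_2)$ into \eqref{local-tetra}), and then to verify the remaining four claims---4-simplex property, noninvolutivity, invariants, and Hirota limit---one at a time.

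First I would resolve \eqref{sys-Hirota} in closed form. The second line of the system is essentially uncoupled: $u_2=y_2$ is given, while $\frac{1}{v_2}=\frac{1}{x_2z_2}$ yields $v_2=x_2z_2$; then $\frac{v_2}{r_2}=\frac{x_2}{t_2}$ gives $r_2=t_2z_2$ and $r_1u_2=t_1x_2$ gives $r_1=t_1x_2/y_2$. Substituting $u_2=y_2$ into $\frac{1}{u_2w_2}=\frac{1}{y_2}$ forces $w_2=1$. On the first line, $\frac{v_1}{u_2}=\frac{x_1}{y_2}+\frac{z_1}{x_2}$ determines $v_1=(x_1x_2+y_2z_1)/x_2$, and $u_1v_1=x_1y_1$ then determines $u_1$; finally $w_1=y_1-u_1/w_2=y_1-u_1$ simplifies to $y_1y_2z_1/(x_1x_2+y_2z_1)$. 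This reproduces \eqref{Hirota-noninv} and, since the system encoded \eqref{local-tetra} from the outset, simultaneously establishes the Lax representation.

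The easy items are then direct checks on the explicit formula. For the Hirota limit, substituting $(x_2,y_2,z_2,t_2)=(1,1,1,1)$ into \eqref{Hirota-noninv} produces the image $\bigl(x_1y_1/(x_1+z_1),\,1,\,x_1+z_1,\,1,\,y_1z_1/(x_1+z_1),\,1,\,t_1,\,1\bigr)$, whose non-auxiliary entries reproduce \eqref{Hirota-limit}. Noninvolutivity is visible from a single coordinate: every output of $\hat{S}$ has its sixth slot equal to $1$, so $(\hat{S}\circ\hat{S}(p))_6=1\neq z_2=p_6$ in general, hence $\hat{S}\circ\hat{S}\neq\id$. For the invariants, the required identities come from direct substitution using the explicit image, and functional independence of $\{I_1,I_2,I_3,I_4\}$ follows from inspection of the Jacobian of the four expressions in the eight coordinates.

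The only real obstacle is verifying the 4-simplex identity \eqref{4-simplex-eq} itself; the Lax representation only guarantees that $\hat{S}$ is a candidate solution, so the identity must be checked directly. I would evaluate both compositions $S^{1234}\circ S^{1567}\circ S^{2589}\circ S^{368,10}\circ S^{479,10}$ and $S^{479,10}\circ S^{368,10}\circ S^{2589}\circ S^{1567}\circ S^{1234}$ on a generic point of $\mathbb{C}^{10}$, iteratively substituting the output of each factor into the next, and compare the resulting rational $10$-tuples coordinate-by-coordinate. This is a large but entirely mechanical symbolic calculation, best entrusted to a computer algebra system; the conserved quantities $I_1,\ldots,I_4$ are a convenient partial sanity check at each intermediate stage, since every factor on either side must preserve them. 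Modulo this symbolic bookkeeping, the theorem follows.
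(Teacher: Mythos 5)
Your proposal is correct and follows essentially the same route as the paper: solve \eqref{sys-Hirota} to obtain the map and its Lax representation, verify the 4-simplex identity by direct (symbolic) substitution into \eqref{4-simplex-eq}, read off the invariants from the explicit formulas and check independence via the gradients, and take the limit $(x_2,y_2,z_2,t_2)\to(1,1,1,1)$. The only cosmetic difference is your noninvolutivity argument (the sixth output coordinate is identically $1$, so $\hat S^2$ cannot return $z_2$), which is a perfectly valid, and if anything slightly cleaner, alternative to the paper's computation of $u_1\circ\hat S$.
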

\begin{proof}
System \eqref{sys-Hirota} has a unique solution represented by map \eqref{Hirota-noninv}; thus, map \eqref{Hirota-noninv} admits Lax representation \eqref{local-tetra}. The 4-simplex property can be readily verified by straightforward substitution to the 4-simplex equation \eqref{4-simplex-eq}. Moreover, we have, for instance, $u_1\circ\hat{S}(x_1,x_2,y_1,y_2,z_1,z_2,t_1,t_2)=\frac{x_1(x_1x_2+y_2z_1)}{x_2(x_1+z_1z_2)}\neq x_1$, i.e. $\hat{S}^2\neq\id$. Therefore, map \eqref{Hirota-noninv} is noninvolutive. Finally, regarding the invariants, we have that, in view of \eqref{Hirota-noninv}, $u_1v_1=x_1y_1,$ $v_2=y_2$, $u_2r_1=x_2t_1$ and $w_2 r_2=z_2t_2$. The matrix with  rows $(\nabla I_i)^t$, $i=1,2,3,4,$ has  rank  4, thus  invariants $I_i$, $i=1,2,3,4,$ are functionally independent. Finally, at the limit $(x_2,y_2,z_2,t_2)\rightarrow (1,1,1,1)$, map \eqref{Hirota-noninv} can be  restricted to the Hirota map \eqref{Hirota-noninv}.
\end{proof} 

\section{Electric network type Bazhanov--Stroganov map}\label{EN_maps}
Here, we apply the 4-simplex extension scheme presented in section \ref{extension scheme} to the  famous electric network tetrahedron map \cite{Kashaev, Kashaev-Sergeev}. We construct novel 4-simplex maps which can be restricted to the electric network map at certain limit.

The electric network tetrahedron map reads \cite{Doliwa-Kashaev, Sergeev}
\begin{equation}\label{EN}
    T:(x,y,z)\rightarrow \left(\frac{xy}{x+z+x y z},x+z+xyz,\frac{yz}{x+z+xyz}\right),
\end{equation}
and it possesses a Lax representation \eqref{Lax-Tetra} for ${\rm L}(x)=\begin{pmatrix} 
x & 1+i x\\ 
1-i x & 0
\end{pmatrix}$,  $i=\sqrt{-1}$, $x\in\mathbb{C}$. 

In order to construct a 4-simplex extension of Hirota map \eqref{Hirota}, we consider the $3\times 3$ extension of ${\rm L}(x)$: {\small 
\begin{equation}\label{L123-EN}
{\rm M}(x)\equiv {\rm L}^3_{12}(x)=\begin{pmatrix} 
x & 1+ix & 0\\ 
1-ix & 0 & 0\\
0 & 0 & 1
\end{pmatrix}.\end{equation}} Then, we introduce a generalisation of matrix ${\rm M}(x)$, namely matrix {\small ${\rm K}(x_1,x_2)=\begin{pmatrix} 
x_1x_2 & 1+ix_1x_2 & 0\\ 
1-ix_1x_2 & x_1x_2 & 0\\
0 & 0 & 1
\end{pmatrix},$}
such that for $x_2\rightarrow 1$, ${\rm K}(x_1,x_2)\rightarrow {\rm M}(x_1)$. 

Let us consider the $6\times 6$ extensions of matrix ${\rm K}(x_1,x_2)$, namely the following
{\small\begin{align*}
    & {\rm K}^6_{123}(x_1,x_2)= \begin{pmatrix} 
  x_1x_2 &  1+ix_1x_2 & 0 & 0 & 0 & 0\\ 
 1-ix_1x_2 &  x_1x_2 & 0 & 0 & 0 & 0\\
0 & 0 & 1 & 0 & 0 & 0\\
0 & 0 & 0 & 1 & 0 & 0 \\
0 & 0 & 0 & 0 & 1 & 0 \\
0 & 0 & 0 & 0 & 0 & 1 \\
\end{pmatrix}, \quad
{\rm K}^6_{145}(x_1,x_2)= \begin{pmatrix} 
 x_1x_2 & 0 & 0 &  1+ix_1x_2 & 0 & 0\\ 
0 & 1 & 0 & 0 & 0 & 0 \\
0 & 0 & 1 & 0 & 0 & 0 \\
 1-ix_1x_2 & 0 & 0 &  x_1x_2 & 0 & 0\\
0 & 0 & 0 & 0 & 1 & 0\\
0 & 0 & 0 & 0 & 0 & 1 \\
\end{pmatrix},\\
& {\rm K}^6_{246}(x_1,x_2)= \begin{pmatrix} 
1 & 0 & 0 & 0 & 0 & 0 \\
0 &  x_1x_2 & 0 &  1+ix_1x_2 & 0 & 0\\ 
0 & 0 & 1 & 0 & 0 & 0 \\
0 &  1-ix_1x_2 & 0 &  x_1x_2 & 0 & 0\\
0 & 0 & 0 & 0 & 1 & 0 \\
0 & 0 & 0 & 0 & 0 & 1
\end{pmatrix},\quad 
 {\rm K}^6_{356}(x_1,x_2)= \begin{pmatrix} 
 1 & 0 & 0 & 0 & 0 & 0 \\
 0 & 1 & 0 & 0 & 0 & 0 \\
0 & 0 &  x_1x_2 & 0 &  1+ix_1x_2 & 0\\ 
0 & 0 & 0 & 1 & 0 & 0 \\
0 & 0 &  1-ix_1x_2 & 0 &  x_1x_2 & 0\\
0 & 0 & 0 & 0 & 0 & 1
\end{pmatrix},
\end{align*}}
and substitute them to the local tetrahedron equation
$$
{\rm K}^6_{123}(u_1,u_2){\rm K}^6_{145}(v_1,v_2){\rm K}^6_{246}(w_1,w_2){\rm K}^6_{356}(r_1,r_2)={\rm K}^6_{356}(t_1,t_2){\rm K}^6_{246}(z_1,z_2){\rm K}^6_{145}(y_1,y_2){\rm K}^6_{123}(x_1,x_2).
$$
The above is equivalent to the following correspondence between $\mathbb{C}^8$ and $\mathbb{C}^8$:
\begin{subequations}\label{corr-EN}
\begin{align}
  &u_2=\frac{x_1x_2y_1y_2}{u_1(x_1x_2+z_1z_2+x_1x_2y_1y_2z_1z_2)},\quad v_2=\frac{x_1x_2+z_1z_2+x_1x_2y_1y_2z_1z_2}{v_1},\label{corr-EN-inv-a}\\ 
  &w_2=\frac{y_1y_2z_1z_2}{w_1(x_1x_2+z_1z_2+x_1x_2y_1y_2z_1z_2)},\quad r_2=\frac{t_1t_2}{r_1}.\label{corr-EN-inv-b}
\end{align}
\end{subequations}

For particular choices of the free variables the above correspondence defines 4-simplex maps. Specifically, we have the following.

\begin{theorem}[Bazhanov--Stroganov electric network transform]
The following maps $S_i:\mathbb{C}^8\rightarrow\mathbb{C}^8$, $i=1,2$, given by
\begin{subequations}\label{EN-map-1}
\begin{align}
    x_1&\mapsto u_1=y_1,\label{EN-map-1-a}\\
    x_2&\mapsto u_2=\frac{x_1x_2y_2}{x_1x_2+z_1z_2+x_1x_2y_1y_2z_1z_2},\label{EN-map-1-b}\\
    y_1&\mapsto v_1=x_1,\label{EN-map-1-c}\\
    y_2&\mapsto v_2=\frac{x_1x_2+z_1z_2+x_1x_2y_1y_2z_1z_2}{x_1},\label{EN-map-1-d}\\
    z_1&\mapsto w_1=t_1,\label{EN-map-1-e}\\
    z_2&\mapsto w_2=\frac{y_1y_2z_1z_2}{t_1(x_1x_2+z_1z_2+x_1x_2y_1y_2z_1z_2)},\label{EN-map-1-f}\\
    t_1&\mapsto r_1=z_1,\label{EN-map-1-g}\\
    t_2&\mapsto r_2=\frac{t_1t_2}{z_1},\label{EN-map-1-h}
\end{align}
\end{subequations}
and
\begin{subequations}\label{EN-map-2}
\begin{align}
    x_1&\mapsto u_1=x_1y_1,\label{EN-map-2-a}\\
    x_2&\mapsto u_2=\frac{x_2y_2}{x_1x_2+z_1z_2+x_1x_2y_1y_2z_1z_2},\label{EN-map-2-b}\\
    y_1&\mapsto v_1=1,\label{EN-map-2-c}\\
    y_2&\mapsto v_2=x_1x_2+z_1z_2+x_1x_2y_1y_2z_1z_2,\label{EN-map-2-d}\\
    z_1&\mapsto w_1=t_1,\label{EN-map-2-e}\\
    z_2&\mapsto w_2=\frac{y_1y_2z_1z_2}{t_1(x_1x_2+z_1z_2+x_1x_2y_1y_2z_1z_2)},\label{EN-map-2-f}\\
    t_1&\mapsto r_1=z_1,\label{EN-map-2-g}\\
    t_2&\mapsto r_2=\frac{t_1t_2}{z_1},\label{EN-map-2-h}
\end{align}
\end{subequations}
respectively, are eight-dimensional 4-simplex maps. Map $S_1$ has the  following functionally independent invariants:
$$
I_1=x_1y_1,\quad I_2=x_1+y_1,\quad I_3=x_2y_2,\quad I_4=z_1t_1,
$$
whereas map $S_2$ admits the functionally independent invariants:
$$
J_1=x_1y_1,\quad J_2=x_2y_2,\quad J_3=z_1t_1,\quad I_4=z_1+t_1.
$$

Maps $S_i$, $i=1,2$, share the Lax representation \eqref{local-tetra}, where {\small ${\rm K}(x_1,x_2)=\begin{pmatrix} x_1x_2 & 1+ix_1x_2 & 0\\ 1-ix_1x_2 & x_1x_2 & 0 \\ 0 & 0 & x_2\end{pmatrix}$}. Furthermore, map $S_2$ is noninvolutive.

Finally, at the limit $(x_1,y_1,z_1,t_1)\rightarrow(1,1,1,1)$, maps $S_i$, $i=1,2$, can  be restricted  to the electric network map Zamolodchikov  tetrahedron map  \eqref{EN}.
\end{theorem}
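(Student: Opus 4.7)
The plan is to follow the blueprint established in the preceding Hirota case, starting from the correspondence \eqref{corr-EN} already derived by substituting ${\rm K}(x_1,x_2)$ into the local tetrahedron equation. In \eqref{corr-EN} the variables $u_2, v_2, w_2, r_2$ are expressed rationally in terms of $u_1, v_1, w_1, r_1$, which remain free; to extract genuine maps $\mathbb{C}^8\to\mathbb{C}^8$ I specialise these free components in two different ways. The choice $(u_1, v_1, w_1, r_1) = (y_1, x_1, t_1, z_1)$ yields $S_1$, while $(u_1, v_1, w_1, r_1) = (x_1 y_1, 1, t_1, z_1)$ yields $S_2$. Because both maps are, by construction, solutions of the local tetrahedron equation \eqref{local-tetra} with the prescribed ${\rm K}$, the Lax-representation claim is automatic.

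Next, I would verify the 4-simplex property \eqref{4-simplex-eq} by direct substitution: one applies the five factors $S^{1234}, S^{1567}, S^{2589}, S^{368,10}, S^{479,10}$ in the two prescribed orders to a generic tuple of $\mathbb{C}^{10}$ and checks componentwise agreement. This is mechanical but lengthy and is best dispatched by computer algebra; it is also the principal obstacle, since the local tetrahedron equation only guarantees that $S_1$ and $S_2$ are \emph{candidates} for 4-simplex maps. The invariants are then confirmed by direct inspection of the explicit formulas: for $S_1$, equations \eqref{EN-map-1-a} and \eqref{EN-map-1-c} give $u_1 v_1 = x_1 y_1$ and $u_1 + v_1 = x_1 + y_1$; equations \eqref{EN-map-1-b} and \eqref{EN-map-1-d} give $u_2 v_2 = x_2 y_2$ after cancellation of the common denominator; and equations \eqref{EN-map-1-e}, \eqref{EN-map-1-g} give $w_1 r_1 = z_1 t_1$. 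The analogous checks for $S_2$ follow the same pattern. Functional independence is then obtained by exhibiting a nonvanishing $4\times 4$ minor of the $4\times 8$ Jacobian matrix of the invariants with respect to the eight coordinates, so that its rank equals $4$.

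Finally, noninvolutivity of $S_2$ follows by computing a single coordinate of $S_2 \circ S_2$: using \eqref{EN-map-2-a} and \eqref{EN-map-2-c}, the first coordinate evaluates to $u_1 \cdot v_1 = (x_1 y_1) \cdot 1 = x_1 y_1 \neq x_1$ in general, so $S_2^2 \neq \id$. The reduction to the electric network map \eqref{EN} at the limit $(x_1, y_1, z_1, t_1) \to (1,1,1,1)$ is a direct substitution: the denominator $x_1 x_2 + z_1 z_2 + x_1 x_2 y_1 y_2 z_1 z_2$ collapses to $x_2 + z_2 + x_2 y_2 z_2$, and the ``$2$''-indexed components of both $S_i$ reduce to $u_2 = \frac{x_2 y_2}{x_2 + z_2 + x_2 y_2 z_2}$, $v_2 = x_2 + z_2 + x_2 y_2 z_2$, $w_2 = \frac{y_2 z_2}{x_2 + z_2 + x_2 y_2 z_2}$, $r_2 = t_2$, exactly reproducing \eqref{EN} with $(x_2, y_2, z_2)$ playing the role of $(x, y, z)$, while the ``$1$''-indexed components collapse to the identity.
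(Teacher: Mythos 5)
Your proposal is correct and follows essentially the same route as the paper: specialising the free components $(u_1,v_1,w_1,r_1)$ of the correspondence \eqref{corr-EN} to $(y_1,x_1,t_1,z_1)$ and $(x_1y_1,1,t_1,z_1)$, reading off the Lax representation by construction, checking the invariants and their functional independence via the rank of the gradient matrix, computing $u_1\circ S_2=x_1y_1\neq x_1$ for noninvolutivity, and taking the limit $(x_1,y_1,z_1,t_1)\to(1,1,1,1)$. If anything, you are slightly more careful than the paper's own proof, which omits an explicit statement that the 4-simplex property must still be verified by direct substitution into \eqref{4-simplex-eq} (the local tetrahedron equation only yields candidates), a step you correctly single out as the principal remaining check.
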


\begin{proof}
    Maps \eqref{EN-map-1} and \eqref{EN-map-2} are obtained for the following choices of the free parameters $u_1$, $v_1$, $w_1$ and $r_1$ of correspondence \eqref{corr-EN}:
    $$
    u_1=y_1,\quad v_1=x_1,\quad w_1=t_1,\quad r_1=z_1,
    $$
    and
    $$
    u_1=x_1y_1,\quad v_1=1,\quad w_1=t_1,\quad r_1=z_1,
    $$
    respectively. Moreover, by construction, they satisfy the matrix  refactrorisation problem  \eqref{local-tetra}, for {\small ${\rm K}(x_1,x_2)=\begin{pmatrix} x_1x_2 & 1+ix_1x_2 & 0\\ 1-ix_1x_2 & x_1x_2 & 0 \\ 0 & 0 & x_2\end{pmatrix}$}. Moreover, it can  be  verified that $u_1\circ S_2(x_1,x_2,y_1,y_2,z_1,z_2,t_1,t_2)=x_1y_1\neq x_1$, namely $S_2^2\neq\id$, which means  that map \eqref{EN-map-2} is noninvolutive.

    Now, in view of \eqref{EN-map-1} and \eqref{EN-map-2} he have that
  $$
u_1v_1=x_1y_1,\quad u_1+v_1=x_1+y_1,\quad u_2v_2=x_2y_2,\quad w_1r_1=z_1t_1,
$$
and
 $$
    u_1v_1=x_1y_1,\quad u_2v_2=x_2y_2,\quad w_1r_1=z_1t_1,\quad w_1+r_1=z_1+t_1,
    $$
 The matrices with  rows $(\nabla I_i)^t$  and $(\nabla J_i)^t$, $i=1,2,3,4,$ have  rank  4, thus  invariants $I_i$ are functionally independent, and so are invariants $J_i$, $i=1,2,3,4.$

 Finally, $(x_1,y_1,z_1,t_1)\rightarrow(1,1,1,1)$, maps $S_i$, $i=1,2$, tend to the  trivial  4-simplex  extension  of  the electric  network tetrahedron map \eqref{EN},  therefore  $S_i$, $i=1,2$ can be regarded its 4-simplex exntensions.
\end{proof}

\section{Conclusions}\label{conclusions}
In this paper, we presented a scheme for constructing interesting 4-simplex extensions of Zamolodchikov tetrahedron maps; this scheme generalises the method presented in \cite{Sokor-2023-PhysD}. We demonstrated the advantage of this scheme using a Sergeev type tetrahedron map as an illustrative example. Moreover, we employed this scheme for the construction of novel Hirota type 4-simplex extensions, namely Bazhanov--Stroganov maps which are reduced to the famous Hirota map \eqref{Hirota} at a certain limit. Similarly, we contructed new 4-simplex maps which can be reduced to the electric network transform  \eqref{EN} at a certain limit.

Specifically, we constructed novel eight-dimensional Bazhanov--Stroganov maps, \eqref{Sergeev-map-1} and \eqref{Sergeev-map-2}, which can be regarded as 4-simplex extensions of Sergeev's map \eqref{Sergeev-b}. Furthermore, we constructed novel eight-dimensional 4-simplex maps, namely maps \eqref{Hirota-a}, \eqref{Hirota-b} and \eqref{Hirota-noninv}, which can be reduced to map \eqref{Hirota} at the limit $(x_2,y_2,z_2,t_2)\rightarrow (1,1,1,1)$. Finally,  we constructed the eight-dimensional  maps \eqref{EN-map-1}  and  \eqref{EN-map-2}, which are new solutions  to the functional 4-simplex equation,  and can be restricted to the famous  electric network 3-simplex map \eqref{EN}. It is worth noting that, even though maps \eqref{Hirota} and \eqref{EN} are involutive (thus they posseses trivial dynamics),  its 4-simplex extensions \eqref{Hirota-noninv} and \eqref{EN-map-2}, respectively, are noninvolutive.

The results of this paper can be extended in the following  ways.

\begin{itemize}
    \item All maps presented in this paper admit enough functionally independent first integrals which indicates their integrability. Their Liouvile interability is an open problem.

    \item We derived maps using particular $3\times3$ matrices. One could study all the possible $3\times3$ matrices that generate 4-simplex maps via  the local tetrahedron equation. Certain classification results on the solutions to the local tetrahedron equation which derive 4-simplex maps will appear in our future publication.

    \item Since the Hirota map \eqref{Hirota} is related to Desargues lattices \cite{Doliwa-Kashaev}, one could study the relation of the latter to the 4-simplex generalisations of the Hirota map \eqref{Hirota-a}, \eqref{Hirota-b} and \eqref{Hirota-noninv}.

    \item Study the relation of the derived Hirota type and electric network Bazhanov--Stroganov maps to 4D-lattice equations employing similar methods  which relate Yang--Baxter and tetrahedron maps to 2D and 3D lattice equations. For instance, the symmetries of the associated lattice equations \cite{Pap-Tongas-Veselov, Kassotakis-Tetrahedron}, the existence of integrals in separable variables \cite{Pavlos-Maciej-2}, lifts and squeeze downs \cite{Kouloukas-Dihn, Pap-Tongas, Sokor-Kouloukas}.

    \item Find which additional matrix conditions must be satisfied so that the solutions to the local tetrahedron equation define 4-simplex maps (see \cite{Sokor-2022} for similar results for tetrahedron maps).

    \item Derive new 4-simplex extensions of all the tetrahedron maps which were considered in \cite{Sokor-2023-PhysD}, including the Kadomtsev--Petviashvili tetrahedron map \cite{Dimakis-Hoissen} map and the NLS type tetrahedron maps \cite{Sokor-2020}, using the scheme presented in Section \ref{extension scheme}. 

    \item The method presented in Section \ref{extension scheme} can be extended by allowing matrix \eqref{K-matrix} depend on other auxiliary variables. For example, in \cite{Igonin-Sokor}, the tetrahedron map $(x,y,z)\stackrel{T}{\rightarrow}\left(y,x,\frac{yz}{x}\right)$ was generated by matrix {\small ${\rm L}(x)=\begin{pmatrix} x & 0 \\ 0 & \frac{1}{x}\end{pmatrix}$}. If ${\rm M}(x)={\rm L}^3_{12}(x)$, then one can consider matrix {\small ${\rm K}(x_1,x_2,x_3)=\begin{pmatrix} \frac{x_1}{x_2} & x_3 & 0 \\ 0 & \frac{1}{x_1} & 0\\ 0 & 0 & x_2\end{pmatrix}$}, such that ${\rm K}(x_1,x_2,x_3)\rightarrow {\rm M}(x_1)$, for $x_2\rightarrow 1, x_3\rightarrow 0$. Matrix ${\rm K}(x_1,x_2,x_3)$ generates via \eqref{local-tetrahedron} the following 4-simplex map
    $$
    (x_1,x_2,x_3,y_1,y_2,y_3,z_1,z_2,z_3,t_1,t_2,t_3)\rightarrow \left(y_1,y_2,\frac{x_1x_3z_2}{y_2z_1},x_1,x_2,\frac{y_2y_3z_1-x_1x_3z_2z_3}{x_1},\frac{y_1z_1}{x_1},z_2,z_3,\frac{t_1x_2}{y_2},t_2,t_3\right),
    $$
    which for $(x_2,x_3,y_2,y_3,z_2,z_3,t_2,t_3)\rightarrow (1,0,1,0,1,0,1,0)$ implies the tetrahedron map $(x,y,z)\stackrel{T}{\rightarrow}\left(y,x,\frac{yz}{x}\right)$.
    \end{itemize}

\section*{Acknowledgements} The work on Sections 2, 3 and 4 is funded by the Russian Science Foundation (Grant No. 21-71-30011), \url{https://rscf.ru/en/project/21-71-30011/}. The  work  on Sections 1, 5 and 6 was supported by the Ministry of Science and Higher Education of the Russian Federation (Agreement on provision of subsidy from the federal budget No. 075-02-2024-1442). I would like to thank S. Igonin for various useful discussions.

\end{document}